\newcommand{\C}{{\sf cost}}
\newcommand{\clr}{{\sf Color}}
\newcommand{\idN}{{\sf idNumber}}
\newcommand{\binN}{{\sf binNumber}}
\newcommand{\HeapMin}{\sf H^{min}}
\newcommand{\gidN}{{\sf gIdNumber}}
\newcommand{\tree}{\sf T}
\newcommand{\findmin}{\sf extractMin}
\newcommand{\insertH}{\sf insertHeap}
\newcommand{\deleteH}{\sf deleteHeap}
\newcommand{\insertT}{\sf insertBST}
\newcommand{\deleteT}{\sf deleteBST}
\newcommand{\optoffline}{\sf OPT_{off}}
\newcommand{\idTobin}{\sf idTobinNumber}
\newcommand{\idToCap}{\sf binToCapacity}
\newcommand{\searchmap}{\sf searchMap}
\newcommand{\insertmap}{\sf insertMap}
\newcommand{\ourAlgoFirst}{\sf DSusingBPC}
\newcommand{\gBin}{\sf gBinNumber}
\newcommand{\VarBinDS}{\sf VariableSizeDS}
\newcommand{\IntervalGen}{\sf IntervalGenerator}
\begin{document}
\title{Online Drone Scheduling for Last-mile Delivery 
}


\author{Saswata Jana\inst{1} \and Giuseppe F. Italiano\inst{2} \and Manas Jyoti Kashyop\inst{2} \and Athanasios L. Konstantinidis\inst{2} \and  Evangelos Kosinas\inst{3} \and
Partha Sarathi Mandal\inst{1}
}
\institute{Indian Institute of Technology Guwahati, Guwahati, India\\
\email{\{saswatajana, psm\}@iitg.ac.in} \and
Luiss University, Rome, Italy\\
\email{gitaliano@luiss.it}, \email{kashyopmanas@gmail.com},
\email{akonstantinidis@luiss.it}\and
University of Ioannina,  Ioannina, Greece\\
\email{ekosinas@cs.uoi.gr}}
\authorrunning{Jana et al.}
%

%
\maketitle              
\begin{abstract}
Delivering a parcel from the nearest distribution hub to the customer's doorstep is called the \textit{last-mile delivery} step in delivery logistics. Last-mile delivery is the costliest and most time-consuming phase of the entire delivery logistics. It is gaining importance due to the substantial rise in the e-commerce market and customers' expectations of quicker delivery. Delivery by \textit{unmanned aerial vehicles} (commonly known as drones) has garnered significant interest due to its environment-friendly operations, faster delivery times, low operational charges, and the ability to reach remote areas.
In this paper, we study a hybrid {\it truck-drones} model for the last-mile delivery step. A  truck moves on a path carrying drones and parcels. 
Drones do the deliveries, whereas the moving truck is used as a base for launching and landing drones synchronously.
We define the \textsc{online drone scheduling} problem, where the truck moves in a predefined path, and the customer's requests appear online during the truck's movement. The objective is to schedule a drone associated with every request to minimize the number of drones used subject to the battery budget of the drones and compatibility of the schedules.
We propose a 3-competitive deterministic algorithm using the next-fit strategy and 2.7-competitive algorithms using the first-fit strategy for the problem with $O(\log n)$ worst-case time complexity per request, where $n$ is the maximum number of active requests at any time. We also introduce \textsc{online variable-size drone scheduling} problem (OVDS). Here, we know all the customer's requests in advance; however, the drones with different battery capacities appear online. The objective is to schedule customers' requests for drones to minimize the number of drones used. We propose a $(2\alpha + 1)$-competitive algorithm for the OVDS problem with total running time $O(n \log n)$ for $n$ customer requests, where $\alpha$ is the ratio of the maximum battery capacity to the minimum battery capacity of the drones. Finally, we address how to generate intervals corresponding to each customer request when there are discrete stopping points on the truck's route, from where the drone can fly and meet with the truck.
\keywords{Online Algorithm \and Optimization \and Drone-Delivery Scheduling  \and Last-mile Delivery}
\end{abstract}

%
%
%
\newpage
\section{Introduction}
\label{sec:Introduction}
Rapid developments in unmanned aerial vehicle (UAV) technology, popularly known as drone technology, motivate scientists and researchers to use drones in various operations. The flexibility and mobility of drones allow them to be used in healthcare, weather forecasts, surveillance, disaster management, agriculture, artificial intelligence, and many more. Recently, delivery giant companies, e.g., Amazon, DHL, and FedEx, have started using drones for the \textit{last-mile delivery} \cite{amazon}, \cite{DHL}. Last-mile delivery is the delivery logistic system's final and costliest step. In this step, the package is collected from the distribution hub and delivered to the customer using a vehicle, e.g., a delivery truck. The problem of delivering goods using vehicles is studied as the \textit{routing problem}, where the objective is to minimize the total make-span. If a single vehicle without any capacity constraint is used in the routing problem, the problem reduces to the \textit{traveling salesman problem (TSP)}. If several vehicles with capacity constraints are used, the routing problem is called the \textit{vehicle routing problem} (VRP). In literature, several variants of both problems are considered, e.g., deployment of depots for refilling the delivery trucks, bounding the number of depots, and introducing time windows to serve a customer~ \cite{laporte1992traveling}, \cite{laporte1992vehicle}, all with the final objective to minimize the total make-span.  The use of drones along with delivery vehicles was studied for the first time by Murray and Chu \cite{MURRAY201586}, where the authors proposed a model called the \textit{flying sidekicks traveling salesman problem} (FSTSP). In this model, a delivery truck and a single drone cooperate to deliver the packages. The truck starts its journey from the depot along with the deliverables and carries the drone. A customer is either served by the truck or by the drone. The authors presented a \textit{mixed integer linear programming} (MILP) formulation and heuristics for finding the best schedule using the model. Later, the authors in \cite{CRISAN201938}, \cite{Murray2019TheMF}  modified the FSTSP model and introduced multiple drones (extended FSTSP model). However, in the extended FSTSP model, the authors did not consider the limitation of the flight endurance of the drones. Daknama and Kraus \cite{Daknama2017VehicleRW} proposed a variation of the extended FSTSP model where every drone has a limited battery capacity (budgeted extended FSTSP model)  and thus returns to the truck after completing the delivery for recharging. Authors in \cite{wang2017vehicle} considered the budgeted extended FSTSP model from the worst-case point of view and gave comparative results depending on the number of drones and the relative speed of drone and truck. In \cite{ahani2020age},  \cite{ghazzai2019future},\cite{kim2018hybrid}, and \cite{park2017battery}, authors considered different UAV-based frameworks with a focus on battery recharging policy. They proposed heuristics-based approaches to optimize different objective functions, such as minimizing the total travel cost, minimizing the energy consumption, and maximizing the total revenue or coverage efficiency. 

The use of drones in delivery logistics has several advantages. Reducing $CO_2$ emissions is a major concern in the transportation system. Recent studies suggest that the use of drones in the delivery logistic system reduces $CO_2$ emissions compared to the use of only delivery trucks or human-operated ground vehicles, for example:~\cite{drone-co-2},~\cite{Co2Emissions_Goodchild}, and ~\cite{drone-co-1}. Using drones in the delivery logistics system has other motivations, such as avoiding traffic congestion by flying,  following a shorter route toward the customer's doorstep, and contactless delivery. Delivery with minimal human involvement or contactless delivery became crucial after the recent pandemic. However, using only drones has certain limitations, such as the size of the deliverable packages and the distance of the customer from the launching location due to limited battery capacity. Therefore, a hybrid model with a delivery truck and several drones (truck-drones model) is more desirable~\cite{wang2017vehicle}. In the truck-drones model, the truck containing all the drones and the packages follows conventional ground routes, commencing its journey from the warehouse. A customer will be delivered either by truck or by drone. In \cite{Boysen2018DroneDF}, authors considered that the truck has some specific discrete stopping points where the drone can take off from the truck and land on the truck. A customer places a request with information about its location, and based on that, a launching point and a rendezvous point for the drone to be used to serve the request are computed. Several works (e.g., \cite{jaillet2008online}) considered that customer requests will appear online, and the objective is to minimize the total make-span. Sorbeli et al. \cite{Betti}
proposed a similar model in the offline version, where all the customer positions and the corresponding launching and meeting points are known in advance. The objective is to maximize the profit by serving a subset of customers with a fixed number of drones subject to battery and compatibility constraints. Authors in \cite{janaPacking} took a variant of the above model where the truck has a sufficient number of drones to complete all the deliveries, and the objective is to use the minimum number of drones.  The authors in \cite{Betti}, \cite{janaPacking}, and \cite{janaSchedulingJournal} proposed several heuristics and approximation algorithms for this offline version. \par
In this work, we follow the hybrid approach and consider the truck-drones model. Our objective is to use the minimum number of drones. For simplicity, we do not mention the customer requests that are close to the truck path and are served by the truck. Further, we consider that the truck has some specific discrete stopping points where the drone can take off from the truck and land on the truck. We compute the launch and rendezvous point of each customer request for a drone, and these requests appear online. Based on the launch and rendezvous point, an interval generator generates an interval for every customer request. We call the amount of battery consumption of a drone used to serve a specific request as the cost of the interval corresponding to that request. The objective is to use the minimum number of drones to serve all the requests. Thus, we differ from the previous works that considered online customer requests intending to minimize the total make-span. We assumed that the truck moves on a known path containing a sufficient number of drones, and the path has some discrete stops from where the drone can launch or meet the truck. Also, the time to reach or leave those points by truck is known. Now a request from a customer will come at any point on the path. We need to select the best points for launching and rendezvous so that the cost to complete the delivery is minimum. 
\subsection{Our Contribution}
\begin{itemize}
    \item We define the \textsc{online drone scheduling} problem to minimize the number of drones in the last-mile delivery scenario.
    \item We propose a deterministic 3-competitive algorithm for the \textsc{online drone scheduling} problem by using the next-fit bin packing strategy.
    \item We propose a 2.7-competitive algorithm for the problem by
    applying the first-fit bin packing strategy.
    \item We show that all the above algorithms use $O(\log n)$ worst-case time complexity per scheduling, where $n$ is the maximum number of active requests at any time. We define some data structures to achieve this time efficiency.
    \item We define another variant of the problem named \textsc{online variable-size drone scheduling} and propose an $(2\alpha +1)$- competitive algorithm with time complexity $O(n \log n)$, where $\alpha$ is the ratio between the maximum and minimum battery capacity of the drones.
    \item We also discuss how to generate delivery intervals in an online setting having some discrete stop points on the truck path.
\end{itemize}
\subsection{Preliminaries}
\label{sec:Prelim}
A graph is an \emph{interval graph} if there is a bijection between its vertices and a family of closed intervals of the real line such that two vertices are adjacent if and only if the two corresponding intervals intersect. 
Such a bijection is called an \emph{interval representation} of the graph, denoted by $\mathcal{I}$.
We identify the intervals of the given representation with the vertices of the graph, interchanging these notions appropriately.
Whether a given graph is an interval graph can be decided in linear time and if so, an interval representation can be generated in linear time~\cite{FG65}.
Notice that every induced subgraph of an interval graph is an interval graph.
A {\it clique} of $G$ is a set of pairwise adjacent vertices of $G$, and a {\it maximal clique} of $G$ is a clique of $G$ that is not properly contained in any clique of $G$. An {\it independent set} of $G$ is a set of pairwise non-adjacent vertices of $G$.

\noindent \textbf{Organization:} The rest of the paper is organized as follows. Section \ref{sec:DroneScehedulingBPC} introduces the problem definition for the \textsc{online drone scheduling} problem and a few algorithms to solve it.
Section \ref{sec:VariableSizeDS} defines the \textsc{online variable-size drone scheduling} problem along with an algorithm. Section \ref{sec:IntervalGen} discusses interval generation. Finally we conclude in Section \ref{sec:conclusion}.

\section{Drone scheduling through bin packing with conflicts}
\label{sec:DroneScehedulingBPC}

We consider the interval representation of the underlying interval graph $G$. Let $I_t = [l_t, r_t]$ be the interval inserted at time $t$. Let $v_t$  be the vertex in $G$ corresponding to $I_t$. Therefore, the insertion or deletion of an interval is the insertion or deletion of the corresponding vertex of the underlying graph $G$. In our work, every 
interval $I_t$ is also associated with a cost ($\C(I_t)$) that corresponds to the amount of battery consumption of a drone while serving the customer request corresponding to the interval.
In this section, we consider the \textsc{online drone scheduling} problem as proper coloring of intervals with a fixed budget $B$ for every color, where $B$ is a positive integer and is part of the input. A color assigned to an interval corresponds to a drone assigned to serve the request represented by the interval and budget $B$ of the color corresponds to the battery capacity of the drone. 


\noindent
We define the following problem.
 \begin{definition}
\label{def:DroneScheduling-as-BPC}	
{\bf (\textsc{Online Drone Scheduling})}
Intervals are appearing online. Suppose at time step $t$, interval $I_t = [l_t, r_t]$ appears with cost $\C(I_t)$ with $l_t \geq t$. The goal is to assign a color to the interval with the following constraints:\\
$1$. Any two Intervals with the same color are disjoint (non-overlapping).\\
$2$. For a particular color, the summation of the costs of all intervals with that color is at most $B$. 
\end{definition}
To solve the problem we will use the algorithms developed for online bin packing with conflicts  in~\cite{EpsteinL-BPC}. For interval $I_t$ at time $t$, our goal is to compute a color $\clr(I_t)$.
A color $\clr(I_t)$ is a $2$-tuple, $\clr(I_t)$ = $(\idN(I_t), \binN(I_t))$, with the following properties.
\begin{enumerate}
\item ({\bf conflict free}) No two overlapping intervals have the same $\idN$. Thus, the $\idN$ ensures a conflict free property.
\item ({\bf budget constraint}) For all the intervals with the same  $\idN$ and $\binN$ (in other words the same color), the summation of their costs must be at most $B$. Therefore, for the intervals with the same $\idN$, $\binN$ ensures that the budget constraint associated with every color is satisfied.
\item ({\bf proper coloring}) The above two properties imply that for any two intervals $I_t \neq I_{t^{\prime}}$ and $I_t \cap I_{t^{\prime}} \neq \emptyset$,
$(\idN(I_t), \binN(I_t)) \neq (\idN(I_{t^{\prime}}),$ $ \binN(I_{t^{\prime}}))$.\\
\end{enumerate}
Based on the above properties, we use the following to compute the color.
\begin{enumerate}
\item $\idN$ is computed using an online interval coloring algorithm where the colors are represented using positive integer numbers. Note that the coloring algorithm used at this step does not have any budget associated with the colors and thus can be assigned to any number of non-overlapping intervals.
For an interval $I_t$, the color (the non-negative integer) allotted to $I_t$ at this step using the online coloring algorithms is assigned as the $\idN(I_t)$.
\item Once the $\idN(I_t)$ is computed for an interval $I_t$ in the previous step, next step is to compute $\binN(I_t)$ ensuring that all the intervals with the same color as $(\idN(I_t), \binN(I_t))$ satisfies the budget constraint $B$. Therefore, we use an online bin packing algorithm to compute the $\binN$ for $I_t$ considering all the previously inserted intervals with the $\idN$ equal to $\idN(I_t)$.\\
\end{enumerate}
For an interval $I_{t^{\prime}} = [l_{t^{\prime}}, r_{t^{\prime}}]$, color of the interval is computed at time $t = l_{t^{\prime}}$. Further, we assume that intervals have distinct endpoints. Therefore, at the time step $t$, the following scenarios occur while handling the update. 
\begin{enumerate}
\item If a new interval $I_t$ appears, then $I_t$ is inserted. Further, at time step $t$ at most one interval appears. 
\item If there is an existing interval  $I_{t^{\prime}}$ with $t^{\prime} < t$ and $r_{t^{\prime}} = t$, then  $I_{t^{\prime}}$ is deleted. 
\item If there is an interval $I_{t^{\prime}}$ with $t^{\prime} \leq t$ and $l_{t^{\prime}} = t$, then color for $I_{t^{\prime}}$ is computed.
\item As intervals have distinct endpoints, scenario $2$ and $3$ are mutually exclusive.\\
\end{enumerate}
Because of the properties of our original \textsc{online drone scheduling} problem, the intervals appearing online satisfy the following properties.
\begin{itemize}
\item Interval appearing at time step $t$ has left endpoint at least $t$. \item Our deletion scheme is that at time step $t$, if there is an interval $I_{t^{\prime\prime}}$ with $t^{\prime\prime} < t$ and $r_{t^{\prime\prime}} = t$, then we delete $I_{t^{\prime\prime}}$. Otherwise, if there is an interval $I_{t^{\prime}}$ with $t^{\prime} \leq t$ and $l_{t^{\prime}} = t$, the color for $I_{t^{\prime}}$ is computed. Note that because of our deletion scheme, while computing the color for $I_{t^{\prime}}$, all the intervals overlapping with $I_{t^{\prime}}$ must contain the point $t$ forming a clique. Further, because of our coloring scheme, colors are assigned to the intervals in the increasing order of their left endpoints.
\end{itemize}
\subsection{Data structures}
\label{sec:DataStruct-BPC-intervals}
We use the following data structures.
\begin{itemize}
\item We store every interval $I_t = [l_t, r_t]$ with $\C(I_t)$ in a map $\mathcal{I}$ using $t$ as key.
\item The left endpoint $l_t$ for every interval $I_t = [l_t, r_t]$ is stored as $(l_t, t)$ in a  min heap $\HeapMin_{L}$ using $l_t$ as key.
\item The right endpoint $r_t$ for every interval $I_t = [l_t, r_t]$ is stored as $(r_t, t)$ in a  min heap $\HeapMin_{R}$ using $r_t$ as key.
\item Color of every interval $I_t = [l_t, r_t]$ is stored as $(\idN(I_t), \binN(I_t))$ in a map $\clr$ with $t$ as the key. In the case, interval $I_t$ is deleted or the color for $I_t$ is yet to be computed, $\clr[t]$ stores NULL.
\end{itemize}
The operation $\findmin()$ returns the minimum element in a min heap. The operation $\insertH()$ inserts an element into a heap and $\deleteH()$ deletes the root element of the heap (that is the minimum element in the case of min heap and the maximum element in the case of max heap). $\searchmap()$ is used to search for a key in a map and $\insertmap()$ is used to insert an entry into a map. In addition, we will use some data structures specific to the computation of $\idN$ and $\binN$, and we will describe them in the relevant sections. 
\subsection{Computing $\idN$}
\label{sec:ComputeIdnumber}
At time step $t$, we consider the interval $I_{t^{\prime}} = [l_{t^{\prime}}, r_{t^{\prime}}]$ with $l_{t^{\prime}} = t$ and $t^{\prime} \leq t$ to compute $\idN(I_{t^{\prime}})$.
Let $\mathcal{I}_t$ be the set of intervals overlapping with $I_{t^{\prime}}$ and contain the point $t$. Since all the intervals have distinct endpoints, the left endpoint of every interval in $\mathcal{I}_t$ is strictly smaller than $t$. We observe the following properties about the set of intervals $\mathcal{I}_t$.
\begin{enumerate}
\item $\mathcal{I}_t$ forms a clique because all the intervals in $\mathcal{I}_t$ contains the point $t$.
\item  Because of our deletion scheme, $\mathcal{I}_t$ is the only set of intervals overlapping with the interval $I_{t^{\prime}}$ and with left endpoints strictly less than $t$.
\item As of our coloring scheme, color for each interval in $\mathcal{I}_t$ is already computed. 
\item Any interval not in $\mathcal{I}_t$ and overlapping with $I_{t^{\prime}}$ must have left endpoint strictly greater than $t$. Because of our coloring scheme, the color for such an interval is not yet computed and we do not consider such an interval while computing the color for $I_{t^{\prime}}$ in the current step.
\end{enumerate} 
Our goal is to find the smallest $\idN$ not assigned to any interval in $\mathcal{I}_t$ and assign it as $\idN(I_{t^{\prime}})$. Along with the data structures described in Section~\ref*{sec:DataStruct-BPC-intervals}, we use the following data structures specifically to compute $\idN$.
\begin{itemize}
\item A counter $\gidN$ that stores the maximum value of $\idN$ used so far. It is initialized to $0$. 
\item A min heap $\HeapMin_{F\idN}$ (a heap of free $\idN$s) containing the numbers in the range $\{1,\cdots,\gidN\}$ that are not used by any existing interval as $\idN$. 
\item A balanced binary search tree $\tree_{U\idN}$ containing the numbers in the range $\{1,\cdots,\gidN\}$ that are used by an existing interval as $\idN$. The operation $\insertT()$ inserts an element to the tree and $\deleteT()$ deletes an element from the tree.
\end{itemize}
  Note that because of our deletion scheme, the only $\idN$s in use are the ones corresponding to the colors of the intervals in $\mathcal{I}_t$. Therefore, the tree $\tree_{U\idN}$ stores only the $\idN$s of the intervals in $\mathcal{I}_t$ maintaining the exact set of $\idN$s that are forbidden for us to use. Similarly, the heap $\HeapMin_{F\idN}$ maintains all the free $\idN$s that are potential candidates for us to use. Using the above data structures, we design the procedure Algorithm~\ref{alg:BPC-computeidNumber}.
\begin{algorithm2e}[h]
	\caption{compute-idNumber$(I_{t^{\prime}})$}	
	\label{alg:BPC-computeidNumber}
            \DontPrintSemicolon
		\If{$\HeapMin_{F\idN}$ is Empty}{
		$\gidN \leftarrow \gidN + 1$\;
		$\idN(I_{t^{\prime}})  \leftarrow \gidN$\;
            }
		\Else{
		$\idN(I_{t^{\prime}})  \leftarrow \findmin(\HeapMin_{F\idN})$\;
		$\deleteH(\HeapMin_{F\idN})$\;
		}
        $\insertT(\tree_{U\idN}, \idN(I_{t^{\prime}}))$\;
\end{algorithm2e}
\begin{lemma}
\label{lem:correctness-of-computeIdN}	
{\sf compute-idNumber()} compute $\idN$ for each interval $I_{t^{\prime}}$.
\end{lemma}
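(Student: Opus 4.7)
The plan is to verify two properties: (i) each interval $I_{t^{\prime}}$ is assigned some value $\idN(I_{t^{\prime}})$ when the procedure terminates, and (ii) the assigned value is conflict-free, meaning no interval in $\mathcal{I}_t$ carries the same $\idN$. Property (i) is immediate from inspecting Algorithm~\ref{alg:BPC-computeidNumber}: exactly one of the two branches of the \textbf{if}/\textbf{else} executes, and each explicitly assigns a value to $\idN(I_{t^{\prime}})$.

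For property (ii), the heart of the argument is to maintain the following invariant at the moment \textsf{compute-idNumber} is invoked on an interval $I_{t^{\prime}}$ with $l_{t^{\prime}} = t$: the tree $\tree_{U\idN}$ contains precisely the set of $\idN$s of the intervals in $\mathcal{I}_t$, while $\HeapMin_{F\idN}$ contains exactly $\{1,\dots,\gidN\} \setminus \tree_{U\idN}$. Granted this invariant, the first branch returns the fresh value $\gidN + 1$, which exceeds every $\idN$ currently in use and so does not conflict with any interval in $\mathcal{I}_t$. The second branch returns the minimum element of $\HeapMin_{F\idN}$, which by the invariant is not in $\tree_{U\idN}$ and therefore is not the $\idN$ of any interval in $\mathcal{I}_t$. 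Combining this with property $4$ noted in Section~\ref{sec:ComputeIdnumber}---namely, that any interval overlapping $I_{t^{\prime}}$ but not in $\mathcal{I}_t$ has not yet been colored---the chosen value cannot coincide with the $\idN$ of any previously colored overlapping interval.

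I would establish the invariant by induction on the sequence of events (arrivals, deletions, and color computations). The base case is trivial: before any request arrives, $\gidN = 0$ and both $\HeapMin_{F\idN}$ and $\tree_{U\idN}$ are empty, so $\mathcal{I}_t = \emptyset$ matches. For the inductive step, the last two lines of Algorithm~\ref{alg:BPC-computeidNumber} perform exactly the bookkeeping needed: the chosen $\idN$ is removed from $\HeapMin_{F\idN}$ (if it came from there) and inserted into $\tree_{U\idN}$, reflecting that $I_{t^{\prime}}$ has just joined $\mathcal{I}_t$ with this $\idN$.

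The main obstacle is that preserving the invariant across deletions requires the companion deletion routine (not shown in Algorithm~\ref{alg:BPC-computeidNumber}) to do the symmetric updates: whenever an interval $I_{t^{\prime\prime}}$ is removed at time $t = r_{t^{\prime\prime}}$, its $\idN$ must be removed from $\tree_{U\idN}$ and inserted back into $\HeapMin_{F\idN}$, so that the freed $\idN$ becomes available for future intervals. Once this bookkeeping is made explicit, the inductive step for deletion is immediate, and the proof of Lemma~\ref{lem:correctness-of-computeIdN} follows.
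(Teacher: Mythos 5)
Your proof is correct and follows essentially the same route as the paper's: the paper states your invariant informally in the text preceding the algorithm (the tree $\tree_{U\idN}$ holds exactly the $\idN$s of the intervals in $\mathcal{I}_t$, and $\HeapMin_{F\idN}$ the complementary free ones in $\{1,\dots,\gidN\}$) and then argues the two branches exactly as you do. The deletion bookkeeping you flag as missing is in fact performed in Algorithm~\ref{alg:DSusingBPC} (lines 7--8 of $\ourAlgoFirst$), so your inductive maintenance of the invariant goes through.
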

\begin{proof}
    Procedure {\sf compute-idNumber()} (refer to Algorithm~\ref{alg:BPC-computeidNumber}), first checks if there is a previously used $\idN$ available for reuse. In the case of such availability, it selects the minimum one and assigns it as the  $\idN$ of the current interval, and adjusts the relevant data structures suitably  (Line $4-6$ in Algorithm~\ref{alg:BPC-computeidNumber}). Otherwise, all the intervals overlapping with the current interval and excluding the current interval form a clique of size $\gidN$. Therefore, a new $\idN$ must be introduced to assign it as the $\idN$ of the current interval (Lines $1-3$ in Algorithm~\ref{alg:BPC-computeidNumber}). 
    Therefore, the procedure {\sf compute-idNumber()} always assigns to the current interval the least $\idN$ not used by any of the existing intervals overlapping with it.\newline
\end{proof}
\begin{lemma}
\label{lem:runningtime-of-computerIdN}	
Running time of procedure {\sf compute-idNumber()} is $O(\log n)$ in the worst-case.
\end{lemma}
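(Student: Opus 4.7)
The plan is to inspect each line of Algorithm~\ref{alg:BPC-computeidNumber} and argue that every individual operation runs in $O(\log n)$, so the total running time follows. The only non-constant operations are the heap operation $\findmin(\HeapMin_{F\idN})$ together with $\deleteH(\HeapMin_{F\idN})$ on the min-heap of free $\idN$s, and the insertion $\insertT(\tree_{U\idN}, \idN(I_{t^{\prime}}))$ into the balanced binary search tree of used $\idN$s. For standard implementations, the cost of each of these is logarithmic in the size of the respective data structure, so it suffices to bound the sizes of $\HeapMin_{F\idN}$ and $\tree_{U\idN}$ by $O(n)$.

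First, I would argue that at any time the counter $\gidN$ is at most $n$. By Lemma~\ref{lem:correctness-of-computeIdN}, a new $\idN$ is created (that is, $\gidN$ is incremented) only when the set $\mathcal{I}_t$ of intervals overlapping the current interval already uses all of $\{1,\dots,\gidN\}$; since $\mathcal{I}_t \cup \{I_{t'}\}$ is a clique of size $\gidN + 1$ at that moment, and all intervals in this clique are simultaneously active, the clique number never exceeds the maximum number $n$ of simultaneously active requests. Hence $\gidN \leq n$ throughout the execution.

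Next, by construction $\HeapMin_{F\idN}$ and $\tree_{U\idN}$ together partition the set $\{1,\dots,\gidN\}$: every integer in this range is either currently assigned as the $\idN$ of some active interval (stored in $\tree_{U\idN}$) or is free (stored in $\HeapMin_{F\idN}$). Consequently each data structure has size at most $\gidN \leq n$. Therefore $\findmin$ and $\deleteH$ on $\HeapMin_{F\idN}$ run in $O(\log n)$, and $\insertT$ on the balanced BST $\tree_{U\idN}$ also runs in $O(\log n)$. Adding the $O(1)$ cost of the conditional test and arithmetic in Lines~1--3 yields an overall worst-case running time of $O(\log n)$.

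The main obstacle I anticipate is the size bound on $\gidN$: one must be careful that $\gidN$ does not drift unboundedly above the current clique size over time (since the algorithm never decrements it). The argument above resolves this by observing that $\gidN$ grows only when a clique of size $\gidN + 1$ is witnessed, so its final value is bounded by the largest clique encountered, which never exceeds $n$. Everything else is a routine accounting of heap and balanced-BST costs.
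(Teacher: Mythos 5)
Your proof is correct and takes essentially the same approach as the paper: the cost is dominated by one heap operation on $\HeapMin_{F\idN}$ and one insertion into the balanced BST $\tree_{U\idN}$, each logarithmic in the size of the respective structure. The paper's own proof stops there by simply defining $n$ as the maximum number of elements ever held in the heap or tree, whereas you additionally show that $\gidN$ --- and hence the size of both structures, which partition $\{1,\dots,\gidN\}$ --- is bounded by the maximum clique size and therefore by the number of simultaneously active requests; this extra step is sound and is what actually ties the bound to the $n$ appearing in the theorem statements.
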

\begin{proof}
    The running time of procedure {\sf compute-idNumber()} is dominated by the standard operations in min heap and binary search tree. Therefore, in the worst-case the procedure takes $O(\log n)$ time, where $n$ is the maximum number of elements present in the heap or the binary search tree throughout the handling of the online interval sequence.\newline 
\end{proof}
\begin{lemma}
\label{lem:gidNLeqOpt}	
$\gidN \leq \optoffline$, where $\optoffline$ is the number of colors used by an optimal offline interval coloring algorithm without any budget constraints on the colors.
\end{lemma}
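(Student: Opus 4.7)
The plan is to exhibit, for the value $M$ that $\gidN$ attains at the end of the execution, a clique of size $M$ in the underlying interval graph $G$, and then invoke the fact that interval graphs are perfect.

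First I would analyze when $\gidN$ is incremented. Inspecting Algorithm~\ref{alg:BPC-computeidNumber}, the counter $\gidN$ grows only in Lines $1$--$3$, which are executed precisely when the heap $\HeapMin_{F\idN}$ is empty at the time a new interval $I_{t'}$ is being processed. Because of the invariant maintained by the data structures (as described in Section~\ref{sec:ComputeIdnumber}), when $\HeapMin_{F\idN}$ is empty, every integer in $\{1,\dots,\gidN\}$ currently appears in $\tree_{U\idN}$, i.e.\ every such integer is the $\idN$ of some interval in $\mathcal{I}_t$. In particular, there exist $\gidN$ pairwise distinct intervals in $\mathcal{I}_t$ at that moment.

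Next I would argue that this produces a clique. By the properties of $\mathcal{I}_t$ established in Section~\ref{sec:ComputeIdnumber}, every interval in $\mathcal{I}_t$ contains the point $t = l_{t'}$ and hence overlaps with $I_{t'}$. Thus the $\gidN$ intervals witnessing fullness of $\HeapMin_{F\idN}$, together with $I_{t'}$ itself, all contain $t$, so they form a clique of size $\gidN + 1$ in $G$; after Line~$2$ the counter becomes this new value, so the current $\gidN$ is at most the clique number $\omega(G)$. Applying this to the last time $\gidN$ is incremented during the whole online sequence yields $\gidN \leq \omega(G)$ for the final value of the counter.

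Finally I would close the argument via perfection of interval graphs. Interval graphs satisfy $\chi(H) = \omega(H)$ for every induced subgraph $H$, and an optimal offline interval coloring uses $\chi(G)$ colors; since the definition of $\optoffline$ in the statement imposes no budget constraint, $\optoffline = \chi(G) = \omega(G) \geq \gidN$, as claimed. The only delicate point is the clique-extraction step: one must be careful that the $\gidN$ intervals whose $\idN$s are ``locked'' in $\tree_{U\idN}$ are genuinely in $\mathcal{I}_t$ at the moment of the increment. This is exactly what the deletion scheme of Section~\ref{sec:DroneScehedulingBPC} guarantees, since any interval $I_{t''}$ with $r_{t''} < t$ has already been removed and hence its $\idN$ has been pushed back into $\HeapMin_{F\idN}$; I would therefore make this bookkeeping explicit before concluding.
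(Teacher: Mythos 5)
Your proof is correct, but it takes a genuinely different route from the paper's. The paper argues by simulation: it observes that {\sf compute-idNumber()} processes intervals in increasing order of left endpoints and always picks the least $\idN$ not used by an overlapping interval, so it mimics the offline greedy left-endpoint coloring (which is optimal for interval graphs), with the deletion scheme guaranteeing that the set of reusable $\idN$s is no smaller than the set of colors available to the offline greedy; hence $\gidN \leq \optoffline$. You instead prove the stronger intermediate bound $\gidN \leq \omega(G)$ directly, by noting that $\gidN$ is incremented only when $\HeapMin_{F\idN}$ is empty, at which point all of $\{1,\dots,\gidN\}$ are locked in $\tree_{U\idN}$ by intervals of $\mathcal{I}_t$ that, together with $I_{t'}$, all contain the point $t$ and thus form a clique; perfection of interval graphs then gives $\omega(G) = \chi(G) = \optoffline$. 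Your argument is essentially an inlined proof of the optimality of left-endpoint greedy coloring that the paper takes as known: it is more self-contained and avoids the paper's somewhat informal ``superset of available colors'' step, at the cost of being longer and invoking perfection explicitly. Your final bookkeeping remark is also sound: the mutual exclusivity of deletion and coloring at a given time step (distinct endpoints) ensures every colored, non-deleted interval at time $t$ has $l < t < r$, which is exactly what makes the clique extraction valid.
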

\begin{proof}
The optimal offline interval coloring uses a greedy strategy. It considers the intervals in the sorted order of their left endpoints and assigns an interval to the least numbered color not already assigned to any of its neighbors. Notice that the procedure {\sf compute-idNumber()} mimics the offline greedy strategy. Further, because of our deletion scheme, the set of free $\idN$s available is a super set of the set of available colors in the offline greedy strategy. Since $\gidN$ denotes the maximum value of $\idN$ used so far by the algorithm, we have $\gidN \leq \optoffline$.
\end{proof}
\subsection{Computing $\binN$}
\label{sec:Next-fit}
At time step $t$, after computing  $\idN(I_{t^{\prime}})$ for the interval $I_{t^{\prime}} = [l_{t^{\prime}}, r_{t^{\prime}}]$ with $l_{t^{\prime}} = t$ and $t^{\prime} \leq t$,  the next task is to compute $\binN$ for $I_{t^{\prime}}$. First we use the next-fit strategy from online bin packing to compute the $\binN$.
\begin{definition}
\label{def:next-fit}
({\bf next-fit})
The next-fit algorithm for online bin packing maintains an active bin. Initially, the active bin is empty. When a new item appears, if the item fits into the active bin, then the next-fit algorithm places the item in the active bin. Otherwise, the current active bin is closed, a new bin is marked as active, and the current item is placed in the new active bin.  
\end{definition}
To use the next-fit strategy for our purpose,  we maintain an active bin number corresponding to every $\idN$ along with its capacity (precisely, remaining capacity). Once an item with a specific $\idN$ does not fit into the current active bin number, a new bin is made active with capacity $B$. We use the following data structures to compute $\binN$ using next-fit.
\begin{itemize}
\item A map $\idTobin$ with $\idN$ as key which stores the currently active bin number corresponding to the  $\idN$. For every new $\idN$, the entry $\idTobin[\idN]$ is initialized to $1$ (that is the bins corresponding to a particular $\idN$ are numbered starting from $1$). 
\item  A map $\idToCap$ with $\idN$ as the key which stores the remaining capacity of the currently active bin number corresponding to the  $\idN$. For every new $\idN$, the entry $\idToCap[\idN]$ is set to $B$.
\end{itemize}
At time step $t$, using the above data structures, we design the following procedure (Algorithm~\ref{alg:BPC-computeBinNumber-nextfit}) to compute $\binN$ for an interval $I_{t^{\prime}} = [l_{t^{\prime}}, r_{t^{\prime}}]$ with $t^{\prime} \leq t$ and $l_{t^{\prime}} = t$ after computing $\idN(I_{t^{\prime}})$.
{\small
\begin{algorithm2e}[h]
	\caption{compute-binNumber-Nextfit$(I_{t^{\prime}})$}	
	\label{alg:BPC-computeBinNumber-nextfit}
        \DontPrintSemicolon	 
		\If{$\searchmap(\idTobin, \idN(I_{t^{\prime}}))$ is Empty}{
		$\insertmap(\idTobin, <\idN(I_{t^{\prime}}),1>)$\;
		$\insertmap(\idToCap, <\idN(I_{t^{\prime}}), B - \C(I_{t^{\prime}})>)$\;
         }
		\Else{
		\uIf{$\idToCap[\idN(I_{t^{\prime}})] \geq \C(I_{t^{\prime}})$}{
		$\idToCap[\idN(I_{t^{\prime}})] \leftarrow \idToCap[\idN(I_{t^{\prime}})] - \C(I_{t^{\prime}})$\;
  }
		\Else{
		$\idTobin[\idN(I_{t^{\prime}})] \leftarrow \idTobin[\idN(I_{t^{\prime}})] + 1$\;
		$\idToCap[\idN(I_{t^{\prime}})] \leftarrow B - \C(I_{t^{\prime}})$\;
		}
		}
		$\binN(I_{t^{\prime}}) \leftarrow \idTobin[\idN(I_{t^{\prime}})]$\;
\end{algorithm2e}}
\begin{lemma}
\label{lem:correctness-of-computeBinN-nextfit}	
{\sf compute-binNumber-Nextfit()} computes $\binN$ for each $I_{t^{\prime}}$.
\end{lemma}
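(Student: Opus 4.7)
The plan is to argue that Algorithm~\ref{alg:BPC-computeBinNumber-nextfit} is a faithful implementation of the next-fit bin packing strategy applied, independently, to the sub-sequence of intervals sharing each fixed value of $\idN$. Since all intervals with a common $\idN$ are pairwise non-overlapping by Lemma~\ref{lem:correctness-of-computeIdN}, the only constraint that $\binN$ must enforce on top of $\idN$ is the budget constraint of Definition~\ref{def:DroneScheduling-as-BPC}: for every pair $(\idN, \binN)$, the sum of costs of the intervals assigned to it is at most $B$. So the task reduces to verifying this budget invariant.

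First I will fix the following loop invariant, maintained immediately before every call to the procedure. For each value $i$ currently stored as a key in $\idTobin$, let $b = \idTobin[i]$ and $c = \idToCap[i]$; then (a) every previously processed interval with $\idN = i$ has $\binN \in \{1, \dots, b\}$, (b) the total cost of the intervals already assigned to the pair $(i, b)$ equals $B - c$, and in particular $c \in [0, B]$, and (c) for every $b' < b$, the total cost assigned to $(i, b')$ is already in $(B - c_{\max}, B]$ where $c_{\max}$ is the maximum cost among intervals in that closed bin — the only fact I actually need is that past bins are never touched again, so any previously satisfied budget continues to hold trivially.

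Next I will verify the invariant is preserved in each of the three branches of Algorithm~\ref{alg:BPC-computeBinNumber-nextfit} for the new interval $I_{t'}$ of cost $\C(I_{t'}) \le B$ (the hypothesis $\C(I_{t'}) \le B$ is part of the problem setting; otherwise no schedule exists). If $\idN(I_{t'})$ is new, lines 2--3 initialize $\idTobin[\idN(I_{t'})] = 1$ and $\idToCap[\idN(I_{t'})] = B - \C(I_{t'})$, and line 10 assigns $\binN(I_{t'}) = 1$; the budget used at $(\idN(I_{t'}), 1)$ is exactly $\C(I_{t'}) = B - (B - \C(I_{t'}))$, matching (b). If the active bin has enough room (line 5), the capacity is reduced by $\C(I_{t'})$ in line 6, remaining $\ge 0$, so (b) is preserved; $\binN(I_{t'})$ inherits the unchanged $\idTobin[\idN(I_{t'})]$. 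Finally, if the active bin overflows, lines 8--9 open a fresh bin by incrementing $\idTobin$ and resetting $\idToCap$ to $B - \C(I_{t'})$; the previous bin is frozen and never revisited, so its (already-valid) budget stays valid, while the new bin contains only $I_{t'}$ with cost $\C(I_{t'}) \le B$.

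Combining these, after every invocation each pair $(\idN, \binN)$ ever produced by the algorithm carries total cost at most $B$, which together with the non-overlap guarantee inherited from Lemma~\ref{lem:correctness-of-computeIdN} is exactly the coloring specification for $\binN$ in Section~2. The only subtlety — and the one piece worth writing out carefully — is the case split on whether $\idN(I_{t'})$ has been seen before, since a fresh $\idN$ requires initializing both maps, whereas the next-fit decision for an existing $\idN$ uses the current residual capacity. I expect no deeper obstacle; the argument is essentially a case check that the code mirrors Definition~\ref{def:next-fit} restricted to each $\idN$-class.
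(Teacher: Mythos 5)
Your proposal is correct and follows essentially the same route as the paper's own proof: a case check over the three branches of Algorithm~\ref{alg:BPC-computeBinNumber-nextfit} (fresh $\idN$, active bin with room, active bin overflow) showing that the procedure faithfully implements next-fit within each $\idN$-class and thereby keeps the total cost of every $(\idN,\binN)$ pair at most $B$. Your explicit loop invariant and the remark that $\C(I_{t'})\le B$ is needed are slightly more formal than the paper's prose, but the underlying argument is identical.
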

\begin{proof}
    For the interval $I_{t^{\prime}}$, the procedure {\sf compute-binNumber-Nextfit()} (refer to Algorithm~\ref{alg:BPC-computeBinNumber-nextfit}) first checks if there is any active bin corresponding to $\idN(I_{t^{\prime}})$. In the case when there is no active bin for $\idN(I_{t^{\prime}})$, the first active bin with number $1$ is created, the remaining capacity of the new bin is adjusted suitably, and  $\binN(I_{t^{\prime}})$ is assigned the value $1$ (Line $1-3$ and Line $10$). Otherwise, if the current active bin has sufficient remaining capacity to accommodate interval $I_{t^{\prime}}$, then the remaining capacity of the bin is adjusted suitably, and  $\binN(I_{t^{\prime}})$ is assigned the number of the current active bin (Line $5-6$ and Line $10$). In the case when the remaining capacity of the current active bin is insufficient to accommodate  $I_{t^{\prime}}$, a new bin corresponding to $\idN(I_{t^{\prime}})$ is created, and is assigned as $\binN(I_{t^{\prime}})$ after appropriately adjusting the remaining capacity (Line $7-9$ and Line $10$). Therefore, procedure {\sf compute-binNumber-Nextfit()} mimics the next-fit strategy ensuring the constrain that sum of the costs of all intervals with color $(\idN(I_{t^{\prime}}), \binN(I_{t^{\prime}}))$ is at most $B$.\newline
\end{proof}
\begin{lemma}
\label{lem:runningtime-of-computeBinN-nextfit}	
The running time of procedure  {\sf compute-binNumber-Nextfit()} is $O(\log n)$ in the worst-case.
\end{lemma}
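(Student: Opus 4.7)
The plan is to bound the runtime of Algorithm~\ref{alg:BPC-computeBinNumber-nextfit} line by line. First I would classify the operations that appear: a single $\searchmap$ on $\idTobin$, at most two $\insertmap$ calls (one into $\idTobin$ and one into $\idToCap$), a constant number of key-based look-ups of the form $\idToCap[\idN(I_{t^{\prime}})]$ and $\idTobin[\idN(I_{t^{\prime}})]$, plus a constant number of comparisons, arithmetic updates, and assignments. Hence the worst-case cost of the procedure is determined by the cost of a map operation.

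The key observation is that both $\idTobin$ and $\idToCap$ are indexed by $\idN$ values, so the number of entries in each map is bounded by $\gidN$. By Lemma~\ref{lem:gidNLeqOpt}, $\gidN \leq \optoffline$, and since the optimal offline unrestricted coloring of an interval graph equals the size of the maximum clique, $\optoffline$ is at most the maximum number of intervals simultaneously active at any time, i.e.\ $n$. Therefore the two maps store at most $n$ entries throughout the execution. Implementing them as balanced binary search trees guarantees that $\searchmap$, $\insertmap$, and key-based access each cost $O(\log n)$ in the worst case.

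All remaining operations in the procedure, namely the comparison $\idToCap[\idN(I_{t^{\prime}})] \geq \C(I_{t^{\prime}})$, the arithmetic update of the remaining capacity, the increment of the active bin counter, and the final assignment to $\binN(I_{t^{\prime}})$, take $O(1)$ time. Since the procedure executes only a constant number of such steps on each call, summing the per-step costs yields the claimed $O(\log n)$ worst-case bound.

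The only subtlety worth care is justifying that the map sizes are controlled by $\gidN$ rather than by the total number of intervals processed so far; this is exactly what Lemma~\ref{lem:gidNLeqOpt} provides, so once it is invoked the argument reduces to routine accounting over the fixed branches of the if-else construct in Algorithm~\ref{alg:BPC-computeBinNumber-nextfit}.
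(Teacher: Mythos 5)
Your proof is correct and follows essentially the same route as the paper: both arguments reduce the cost of the procedure to a constant number of map operations plus $O(1)$ arithmetic and comparisons. You add a useful refinement the paper leaves implicit, namely that the maps are keyed by $\idN$ and hence hold at most $\gidN \leq \optoffline \leq n$ entries by Lemma~\ref{lem:gidNLeqOpt}, which ties the $O(\log n)$ bound to the $n$ of the theorem statement rather than to an unspecified map size.
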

\begin{proof}
    The running time of procedure {\sf compute-binNumber-Nextfit()} (Algorithm~\ref{alg:BPC-computeBinNumber-nextfit}) is dominated by the standard search ($\searchmap()$) and insert ($\insertmap()$) operation in a map data structure. The operation  $\searchmap()$ takes $O(1)$ time in the worst-case. The operation $\insertmap()$ takes $O(\log n)$ time in the worst-case where $n$ is the maximum number of elements present in the map data structure throughout handling the online interval sequence. Therefore, the worst-case running time of procedure {\sf compute-binNumber-Nextfit()} is $O(\log n)$.\newline
\end{proof}
Finally, we present the algorithm $\ourAlgoFirst$ (online drone scheduling using online bin packing with conflicts, refer to Algorithm~\ref{alg:DSusingBPC} for pseudo code) which handles the update at time $t$ as follows.
\begin{itemize}
\item If a new interval $I_t$ is to be inserted, then insertion is performed by updating the relevant map and heap data structures (Line $1-4$ in Algorithm~\ref{alg:DSusingBPC}).
\item If there is an interval $I_{t^{\prime\prime}}$ with $r_{t^{\prime\prime}} = t$, then $I_{t^{\prime\prime}}$ is deleted by updating the relevant data structures (Line $5-9$ in Algorithm~\ref{alg:DSusingBPC}).
\item If  an interval $I_{t^{\prime\prime}}$ with $l_{t^{\prime\prime}} = t$, then color for $I_{t^{\prime\prime}}$ is computed by calling Algorithm~\ref{alg:BPC-computeidNumber} (Line $13$ in Algorithm~\ref{alg:DSusingBPC}) and Algorithm~\ref{alg:BPC-computeBinNumber-nextfit} (Line $14$ in Algorithm~\ref{alg:DSusingBPC}).
\end{itemize}

\noindent In the next lemma we prove that $\ourAlgoFirst$ is $3$-competitive similar as of \cite{EpsteinL-BPC}.
\begin{algorithm2e}[H]
\caption{$\ourAlgoFirst(I_{t} = [l_{t}, r_{t}])$}	
\label{alg:DSusingBPC}
\DontPrintSemicolon
\If{$I_{t} \neq NULL$}{
$\insertmap(\mathcal{I}, < t, I_{t} >)$\;
$\insertH(\HeapMin_{L}, (l_{t}, t) )$\;
$\insertH(\HeapMin_{R}, (r_{t}, t) )$\;
}
$r_{t^{\prime\prime}} \leftarrow \findmin(\HeapMin_{R})$\;
\If{$r_{t^{\prime\prime}} == t$ }{
$\insertH(\HeapMin_{F\idN}, \idN(I_{t^{\prime\prime}}))$\;
$\deleteT(\tree_{U\idN}, \idN(I_{t^{\prime\prime}}))$\;
$\deleteH(\HeapMin_{R})$\;
}
\Else{
$l_{t^{\prime\prime}} \leftarrow \findmin(\HeapMin_{L})$\;
\If{$l_{t^{\prime\prime}} == t$}{
compute-idNumber$\sf (I_{t^{\prime\prime}})$\;
compute-binNumber-Nextfit$\sf (I_{t^{\prime\prime}})$\;
$\insertmap(\clr, <t^{\prime\prime},$ $ (\idN(l_{t^{\prime\prime}}),$ $ \binN(l_{t^{\prime\prime}}))>)$\;
$\deleteH(\HeapMin_{L})$\;
}
}
\end{algorithm2e}

\begin{lemma}
\label{lem:Ourresult-3-competitve}
$\ourAlgoFirst$ is $3$-competitive.
\end{lemma}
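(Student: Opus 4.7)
The plan is to bound the total number of colors produced by $\ourAlgoFirst$ against two separate lower bounds on the offline optimum, one coming from the conflict constraint and one from the budget constraint. Let $\mathrm{OPT}$ denote the minimum number of drones needed to serve all requests that appear. Let $\gidN$ denote the final value of the global id-counter, and for each $j \in \{1,\ldots,\gidN\}$ let $k_j$ be the total number of bins ever opened by Algorithm~\ref{alg:BPC-computeBinNumber-nextfit} for id-number $j$, and let $C_j$ be the total cost of intervals that received $\idN = j$. The algorithm outputs exactly $\sum_{j=1}^{\gidN} k_j$ distinct colors, so this sum is what I need to control.

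First I would establish a per-id next-fit inequality. By the rule in Algorithm~\ref{alg:BPC-computeBinNumber-nextfit}, whenever a bin $B_{i+1}$ is opened for id-number $j$ the first item placed there did not fit into the current active bin $B_i$; because next-fit never reopens $B_i$ afterwards, its final stored cost together with the first item of $B_{i+1}$ strictly exceeds $B$, and hence the combined cost of the pair $(B_i, B_{i+1})$ is also strictly greater than $B$. Grouping the $k_j$ bins for id-number $j$ into consecutive pairs $(B_1,B_2),(B_3,B_4),\ldots$ I obtain
\begin{equation*}
C_j \;>\; B \left\lfloor \frac{k_j}{2} \right\rfloor \;\geq\; \frac{B(k_j - 1)}{2}, \qquad \text{equivalently,} \qquad k_j \;<\; \frac{2 C_j}{B} + 1.
\end{equation*}

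Next I would sum over $j$ and invoke two lower bounds on $\mathrm{OPT}$. Every feasible schedule is in particular a proper coloring of the underlying interval graph, so $\mathrm{OPT} \geq \optoffline$, and by Lemma~\ref{lem:gidNLeqOpt} this gives $\gidN \leq \mathrm{OPT}$. Every feasible schedule must also pack the total cost into drones of capacity $B$, so $\sum_{j} C_j \leq B \cdot \mathrm{OPT}$. Combining,
\begin{equation*}
\sum_{j=1}^{\gidN} k_j \;<\; \frac{2 \sum_j C_j}{B} + \gidN \;\leq\; 2\,\mathrm{OPT} + \mathrm{OPT} \;=\; 3\,\mathrm{OPT},
\end{equation*}
which, together with the integrality of the number of bins, yields the claimed $3$-competitiveness.

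The main obstacle will be the per-id next-fit step: I must handle the potentially half-empty last bin when $k_j$ is odd, and I must observe that the cost stored in $B_i$ at the time $B_{i+1}$ is opened actually equals the final cost of $B_i$ (this relies on the fact that next-fit never returns to a closed bin, a point worth stating explicitly). The remainder of the argument is routine bookkeeping combined with two clean invocations of lower bounds on $\mathrm{OPT}$, one of which is the already established Lemma~\ref{lem:gidNLeqOpt}.
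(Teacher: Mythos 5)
Your proposal is correct and follows essentially the same route as the paper's proof: the per-id next-fit inequality $L_i \leq 2\C(L_i)/B + 1$ obtained from consecutive-bin pairs, combined with the two lower bounds $\gidN \leq \optoffline \leq \optoffline^{B}$ (via Lemma~\ref{lem:gidNLeqOpt}) and $\sum_i \C(L_i) \leq B\cdot\optoffline^{B}$. The only cosmetic difference is that you pair bins disjointly rather than over all overlapping consecutive pairs, which yields the same bound.
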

\begin{proof}
Let $\sf L_i$ be the number of bins, each of capacity $B$, created corresponding to $\idN = i$. Let $\C(L_i)$ correspond to the summation of the costs of all the intervals whose color consists of  $\idN = i$ and $\binN = L_i$. Notice that in the next-fit strategy, for any two consecutive bins, the sum of the occupied capacity in the bins is strictly greater than $B$. Therefore, it is easy to observe that $\frac{\C(L_i)}{B} \geq \frac{L_i -1}{2}$ or $L_i \leq 2\cdot \frac{\C(L_i)}{B} + 1$. Let ${\sf ALG}$ be the total number of different colors used by $\ourAlgoFirst$. Let $\optoffline^{B}$ be the number of different colors used by an optimal offline algorithm with budget constraint $B$ for every color. 
Then,
\setlength{\belowdisplayskip}{0.5pt} \setlength{\belowdisplayshortskip}{0.8pt}
\setlength{\abovedisplayskip}{0.5pt} \setlength{\abovedisplayshortskip}{0.5pt} 
\begin{equation*}
  {\sf ALG} = \sum_{i=1}^{\gidN}{L_i}  \leq 2 \cdot {\left( \sum_{i=1}^{\gidN}{\frac{\C(L_i)}{B}} \right)} + \gidN \\
\end{equation*}
It is easy to observe that $\optoffline^{B} \geq \sum_{i=1}^{\gidN}{\frac{\C(L_i)}{B}}$. In Lemma~\ref{lem:gidNLeqOpt}, we proved that $\gidN \leq \optoffline$, where $\optoffline$ is the number of different colors used by an optimal offline coloring algorithm without any budget constraints on the colors. Again, it is easy to observe that $\optoffline \leq \optoffline^{B}$. Therefore,
\begin{equation*}
    {\sf ALG} \leq 2 \cdot {\left( \sum_{i=1}^{\gidN}{\frac{\C(L_i)}{B}} \right)} + \gidN \leq  2 \cdot \optoffline^{B} + \optoffline \leq 3 \cdot \optoffline^{B}\\
\end{equation*}
Hence, algorithm $\ourAlgoFirst$ is $3$-competitive.
\end{proof}
  \begin{lemma}
\label{lem:Ourresult-updatetime}
The update time of $\ourAlgoFirst$ is $O(\log n)$ in the worst-case.
\end{lemma}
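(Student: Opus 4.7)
The plan is to proceed by direct inspection of Algorithm~\ref{alg:DSusingBPC}, showing that every line executes in $O(\log n)$ time in the worst case and that only a constant number of such lines are executed per update. The key observation driving the bound is that every data structure touched by $\ourAlgoFirst$ (namely $\mathcal{I}$, $\HeapMin_{L}$, $\HeapMin_{R}$, $\clr$, $\HeapMin_{F\idN}$, $\tree_{U\idN}$, $\idTobin$, and $\idToCap$) holds at most a number of elements proportional to the number of currently active requests, so its size is $O(n)$ by hypothesis.

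First I would classify the operations. The routines $\insertH$, $\deleteH$, $\insertT$, $\deleteT$, $\insertmap$ on balanced trees and binary heaps of size $O(n)$ each cost $O(\log n)$ in the worst case, while $\findmin$ on a heap and $\searchmap$ on a balanced map take $O(1)$ and $O(\log n)$ respectively. Thus Lines 1--4 (insertion block) together cost $O(\log n)$, and so do the deletion block (Lines 6--9) and the bookkeeping of Lines 11--17 when restricted to the standard data-structure operations.

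Next I would handle the two procedure calls inside the coloring branch. By Lemma~\ref{lem:runningtime-of-computerIdN}, a call to {\sf compute-idNumber} costs $O(\log n)$ in the worst case, and by Lemma~\ref{lem:runningtime-of-computeBinN-nextfit} so does a call to {\sf compute-binNumber-Nextfit}. These are the only non-elementary subroutines invoked, and each is invoked at most once per update.

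Finally I would add the costs. The update at time $t$ executes at most one insertion block, one deletion block, and one coloring block; within each block only a constant number of $O(\log n)$-time operations and at most one call to each of the two subroutines above occur. Summing these constant contributions yields a worst-case update time of $O(\log n)$, which completes the proof. The main potential obstacle, already dispatched by Lemmas~\ref{lem:runningtime-of-computerIdN} and~\ref{lem:runningtime-of-computeBinN-nextfit}, is bounding the two subroutines; the remainder is a routine accounting argument over standard heap, balanced-BST, and map operations on structures of size $O(n)$.
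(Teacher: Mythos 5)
Your proof is correct and follows essentially the same route as the paper: both bound each standard heap/BST/map operation by $O(\log n)$, invoke Lemma~\ref{lem:runningtime-of-computerIdN} and Lemma~\ref{lem:runningtime-of-computeBinN-nextfit} for the two subroutine calls, and note that only a constant number of such operations occur per update. Your version is merely a more explicit accounting of the same argument.
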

\begin{proof}
    Update time of $\ourAlgoFirst$ is dominated by standard operations in the heap, binary search tree and map, every operation taking $O(\log n)$ time in the worst-case. From Lemma~\ref{lem:runningtime-of-computerIdN} and Lemma~\ref{lem:runningtime-of-computeBinN-nextfit}, computing color for an interval takes $O(\log n)$ time in the worst-case. Therefore, update time $\ourAlgoFirst$ is $O(\log n)$ in the worst-case.
\end{proof}
Using Lemma~\ref{lem:Ourresult-3-competitve} and Lemma~\ref{lem:Ourresult-updatetime},  we get the following result.
\begin{theorem}
\label{thm:3competitiveDS}
There exists a deterministic $3$-competitive algorithm for online \textsc{drone scheduling} problem that performs every drone allotment in $O(\log n)$ worst-case time, where $n$ is the maximum number of requests being served by some drone at any instance (active requests) through out the handling of the request sequence. 
\end{theorem}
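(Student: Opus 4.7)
The plan is to exhibit the algorithm $\ourAlgoFirst$ (Algorithm~\ref{alg:DSusingBPC}) as the witness for the theorem, and then derive the two quantitative claims (competitive ratio $3$ and per-request time $O(\log n)$) by invoking the lemmas already established in this section. Since Lemma~\ref{lem:Ourresult-3-competitve} and Lemma~\ref{lem:Ourresult-updatetime} together match exactly the statement of the theorem, the proof reduces to stitching them together, so my proposal is essentially to record this combination and to verify that the output of $\ourAlgoFirst$ is indeed a valid drone schedule as required by Definition~\ref{def:DroneScheduling-as-BPC}.

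First I would observe that, by construction, the color assigned to each interval $I_{t^{\prime}}$ is the pair $(\idN(I_{t^{\prime}}), \binN(I_{t^{\prime}}))$, where $\idN$ is produced by Algorithm~\ref{alg:BPC-computeidNumber} (whose correctness is Lemma~\ref{lem:correctness-of-computeIdN}, guaranteeing the conflict-free property) and $\binN$ is produced by Algorithm~\ref{alg:BPC-computeBinNumber-nextfit} (whose correctness is Lemma~\ref{lem:correctness-of-computeBinN-nextfit}, guaranteeing the budget constraint). Hence the two properties of Definition~\ref{def:DroneScheduling-as-BPC} are satisfied, so $\ourAlgoFirst$ outputs a feasible schedule, where every color corresponds to a drone.

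Next I would plug in the competitive bound: Lemma~\ref{lem:Ourresult-3-competitve} shows that the total number ${\sf ALG}$ of distinct colors used equals $\sum_{i=1}^{\gidN}{L_i} \le 2\sum_{i=1}^{\gidN}{\C(L_i)/B} + \gidN$, and combining the trivial lower bound $\optoffline^{B} \ge \sum_{i=1}^{\gidN}{\C(L_i)/B}$ with Lemma~\ref{lem:gidNLeqOpt} (which gives $\gidN \le \optoffline \le \optoffline^{B}$) yields ${\sf ALG} \le 3\cdot \optoffline^{B}$. For the running-time claim, Lemma~\ref{lem:Ourresult-updatetime} already shows that each of the insertion, deletion, and color-computation branches of Algorithm~\ref{alg:DSusingBPC} consists of a constant number of operations on a min heap, a balanced BST, or a map, each operation costing $O(\log n)$ in the worst case, where $n$ is the maximum number of active requests at any instant.

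I do not expect a serious obstacle, because the content has already been proved piecewise; the only care that the write-up needs is to make explicit that a single event at a time step triggers exactly one of the three mutually exclusive branches of $\ourAlgoFirst$ (since left and right endpoints are distinct), so the per-request work is additive and still $O(\log n)$. The theorem then follows directly.
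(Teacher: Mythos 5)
Your proposal is correct and matches the paper exactly: the paper derives Theorem~\ref{thm:3competitiveDS} by directly combining Lemma~\ref{lem:Ourresult-3-competitve} (the $3$-competitive bound via next-fit and $\gidN \leq \optoffline \leq \optoffline^{B}$) with Lemma~\ref{lem:Ourresult-updatetime} (the $O(\log n)$ worst-case update time). Your additional remarks on feasibility via Lemmas~\ref{lem:correctness-of-computeIdN} and~\ref{lem:correctness-of-computeBinN-nextfit} and on the mutual exclusivity of the branches are consistent with, and slightly more explicit than, the paper's presentation.
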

\subsection{Improved algorithm using first-fit for computing $\binN$}
\label{sec:first-fit}
In this section, we show that the competitive ratio of our result in Theorem~\ref{thm:3competitiveDS} can be improved by using the first-fit strategy from online bin packing to compute $\binN$ for an interval.
\begin{definition}
\label{def:first-fit}
({\bf first-fit})
The first-fit algorithm for online bin packing maintains a collection of active bins in sorted order (increasing order) with respect to their appearance. When a new item appears, among all the active bins that can accommodate the new item, it finds the first bin where the item can fit.  If no such bin exists, a new active bin is created, and the item is placed in the new bin. When the remaining capacity of an active bin becomes zero, the bin is closed and marked as inactive.
\end{definition}
\begin{algorithm2e}[h]
	\caption{compute-binNumber-Firstfit$(root, I_t)$}	
	\label{alg:BPC-computeBinNumber-firstfit}
        \DontPrintSemicolon	
	\If{$root.max < \C(I_t)$}{
        $\gBin[\idN(I_t)] \gets \gBin[\idN(I_t)] +1$\;
        $insertBST(\tree_{\binN}, \gBin[\idN(I_t)], B-\C(I_t))$\;
        \textbf{return} $\gBin[\idN(I_t)]$\;
        }
        \Else{
        \uIf{$root.rem \geq \C(I_t)$}{
        \uIf{$root.left.max < \C(I_t)$ or $root.left = NULL$}{
        $decreaseKey(\tree_{\binN}, root, \C(I_t))$\;
        \textbf{return} $root.bin$\;
        }
        \Else{
        \textbf{return} compute-binNumber-Firstfit$(root.left, I_t)$\;
        }
        }
        \Else{
        \uIf{$root.left \neq NULL$ and $root.left.max \geq \C(I_t)$}{
        \textbf{return} compute-binNumber-Firstfit$(root.left, I_t)$\;
        }
        \Else{
        \textbf{return} compute-binNumber-Firstfit$(root.right, I_t)$\;
        }
        }
        }
	
\end{algorithm2e}
To implement the first-fit strategy to compute $\binN$, for every $\idN$  in the map $\idTobin$, we maintain the set of active bin numbers in a balanced binary search tree $\tree_{\binN}$ with the bin number as the key. We store another map $\gBin$ with $\idN$ as key to store the number of bins created corresponding to this $\idN$. Each $node$ in the tree $\tree_{\binN}$ has five attributes, (1) $node.bin$: bin number (2) $node.rem$: remaining capacity of the drone (3) $node.max$: maximum remaining capacity among all the nodes in the sub-tree rooted at $node$ (4) $node.left$: left child of the $node$ (5) $node.right$: right child of the node.
$insertBST(T,i,c)$ is used to insert a $node$ in the tree $T$ with the key $i$ and $node.rem = c$. $decreaseKey(T,node,c)$ is used to decrease the $node.rem$ by $c$ in the tree $T$. For both operations, we also need to update the $max$ attribute of each node on the path of $root$ to  $node$.

Suppose we need to compute $\binN$ for interval $I_t$ with cost $\C(I_t)$ and $\idN(I_t)$. 
Let $\tree_{\binN}$ be the tree stored in the map $\idTobin$ corresponding to $\idN(I_t)$. If  $\tree_{\binN}$ is empty, then a new tree is created consisting of a single $node$ with $node.bin = 1$, $node.rem = node.max = B-\C(I_t)$, and both the children set to $NULL$. $node$ set as the root of this tree and $\gBin[\idN(I_t)] = 1$. Otherwise, Algorithm \ref{alg:BPC-computeBinNumber-firstfit} is called with the $root$ of $\tree_{\binN}$ and the interval $I_t$ as its' parameter. The returned value is set as the bin number of $I_t$ i.e., $\binN(I_t)$. If $root.max < cost(I_t)$, then all the nodes have remaining battery capacity lesser than $\C(I_t)$. For this case, we need to introduce a new bin by increasing $\gBin[\idN(I_t)]$ by one and set it as the bin number of $I_t$. A $node$ corresponding to this newly introduced bin needs to be added to the tree with $node.bin = \gBin[\idN(I_t)]$ and $node.rem = B-\C(I_t)$ (Line $1-4$). Otherwise ( $root.max \geq \C(I_t)$), we need to find a first bin to fit the interval $I_t$. If $root.rem \geq \C(I_t)$ (Line $6$), then either the root node or a node in its' left subtree is the first node to accommodate $I_t$. If the maximum remaining capacity of the roots' left subtree is lesser than $\C(I_t)$ (Line $7$), then the bin number to the root node is set as the bin number of $I_t$ (Line $9$). Also, we need to decrease $root.rem$ by $\C(I_t)$
(Line $8$). If $root.left.max \geq \C(I_t)$, then we need to find the bin into the left subtree of the root by calling the Algorithm \ref{alg:BPC-computeBinNumber-firstfit} for the $root.left$ node (Line $10-11$ and $13-14$). If $root.rem < \C(I_t)$ and $root.left.max < \C(I_t)$, then the bin to fit the interval $I_t$ must be in the right subtree of root. So we call the Algorithm \ref{alg:BPC-computeBinNumber-firstfit} for this case as $root.right$ as its' parameter (Line $15-16$). All these operations are standard operations in a balanced binary search tree and take $O(\log n)$ time in the worst-case. Therefore, the update time of the $\ourAlgoFirst$ (Algorithm~\ref{alg:DSusingBPC}) is $O(\log n)$ in the worst-case if the first-fit strategy is used to compute $\binN$ in Algorithm~\ref{alg:BPC-computeBinNumber-nextfit} instead of next-fit. Here, $n$ is the number of requests received at any instance.
\begin{theorem}
Algorithm $\ourAlgoFirst$ is $2.7$-competitive when the first-fit strategy is used to compute $\binN$.
\label{thm:first-fit}
\end{theorem}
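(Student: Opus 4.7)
The plan is to parallel the proof of Lemma~\ref{lem:Ourresult-3-competitve}, replacing the factor-$2$ per-color bound of next-fit with the sharper factor-$1.7$ guarantee of online first-fit bin packing, as used by Epstein and Levin~\cite{EpsteinL-BPC}. Let $L_i$ denote the number of bins the algorithm opens for $\idN = i$, and let $\C(L_i)$ denote the total cost of all intervals whose color has $\idN = i$. The target inequality is $\sum_{i=1}^{\gidN} L_i \leq 2.7 \cdot \optoffline^B$.

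First, I would verify that Algorithm~\ref{alg:BPC-computeBinNumber-firstfit} faithfully implements online first-fit per $\idN$: when the search tree $\tree_{\binN}$ associated with $\idN(I_t)$ is traversed, the left-first descent locates the earliest-created bin whose remaining capacity accommodates $\C(I_t)$, and a new bin is opened only when every existing bin is too full. By Lemma~\ref{lem:correctness-of-computeIdN} the intervals sharing an $\idN$ are pairwise disjoint, so for each $i$ the corresponding cost sequence constitutes a genuine online bin-packing instance on bins of capacity $B$, to which the first-fit guarantee applies.

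Second, I would invoke the first-fit guarantee of~\cite{EpsteinL-BPC}: per $\idN$, the number of bins $L_i$ opened by first-fit is at most $\tfrac{17}{10}$ times the offline bin-packing optimum plus a low-order correction. Summing over $i$ and combining with the two lower bounds $\optoffline^B \geq \sum_i \C(L_i)/B$ (total-cost bound) and $\optoffline^B \geq \gidN$ (via Lemma~\ref{lem:gidNLeqOpt} together with $\optoffline \leq \optoffline^B$), the target will follow as
\[
\sum_{i=1}^{\gidN} L_i \;\leq\; 1.7 \cdot \optoffline^B + \optoffline^B \;=\; 2.7 \cdot \optoffline^B.
\]

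The main technical obstacle is producing the per-$\idN$ first-fit inequality in a form that plugs cleanly into these two lower bounds. For next-fit the elementary observation ``two consecutive bins together exceed $B$'' yields $L_i \leq 2\,\C(L_i)/B + 1$ immediately; first-fit's factor $1.7$ is not so direct, since it requires a split between big items (size $> B/2$, each contributing one dedicated bin both to the algorithm and to any feasible solution) and small items (for which first-fit fills all but a bounded number of bins beyond a constant threshold). I plan to carry out this case split, or equivalently the weight-function argument of Ullman/Dosa--Sgall used in~\cite{EpsteinL-BPC}, ensuring that each first-fit bin accumulates weight at least $1$ while any non-overlapping set of intervals of total cost at most $B$ accumulates weight at most $1.7$. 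Once this weighted inequality is in place, charging the weights against the offline optimum and absorbing the additive $\gidN$ term via $\gidN \leq \optoffline^B$ yields the claimed $2.7$ competitive ratio exactly as in Lemma~\ref{lem:Ourresult-3-competitve}.
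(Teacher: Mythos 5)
Your proposal is correct and follows essentially the same route as the paper's proof: decompose the bin count by $\idN$, apply the classical first-fit weight function so that $L_i \leq W(i) + 1$ and $\sum_{i} W(i) \leq 1.7\cdot\optoffline^{B}$, and absorb the additive $\gidN$ term via $\gidN \leq \optoffline \leq \optoffline^{B}$. You correctly identify that a naive per-$\idN$ ratio bound would not combine with the lower bounds, and your fallback to the global weight-function charging is exactly what the paper does (citing the standard first-fit analysis).
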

\begin{proof}
Let $\sf L_i$ be the number of bins, each of capacity $B$, created corresponding to $\idN = i$. We apply the same weight function described in \cite{FF-analysis} for analyzing the first fit strategy on $L_i$.  After applying the weight function, let $\sf W(i)$ correspond to the summation of the weights of all the intervals whose color consists of  $\idN = i$ and $\binN$ is one of the $\sf L_i$ bins created corresponding to $\idN = i$. It is shown in \cite{FF-analysis} that $L_i \leq W(i) + 1$. Let ${\sf ALG_f}$ be the total number of different colors used by algorithm $\ourAlgoFirst$ when first-fit is used to compute $\binN$.
Let $\optoffline^{B}$ be the number of different colors used by an optimal offline algorithm with budget constraint $B$ for every color. Thus,
\setlength{\belowdisplayskip}{0.5pt} \setlength{\belowdisplayshortskip}{0.5pt}
\setlength{\abovedisplayskip}{0.5pt} \setlength{\abovedisplayshortskip}{0.5pt}  
\begin{equation*}
    {\sf ALG_f} = \sum_{i=1}^{\gidN}{L_i}  \leq \sum_{i=1}^{\gidN}{\left(W(i) + 1 \right)} = \sum_{i=1}^{\gidN}{W(i)} + \gidN\\
\end{equation*}
From the first-fit analysis of \cite{FF-analysis},  $\sum_{i=1}^{\gidN}{W(i)} \leq 1.7 \cdot \optoffline^{B}$. In Lemma~\ref{lem:gidNLeqOpt}, we proved that $\gidN \leq \optoffline$, where $\optoffline$ is the number of different colors used by an optimal offline coloring algorithm without any budget constraints on the colors. Again, it is easy to observe that $\optoffline \leq \optoffline^{B}$. Therefore,
\begin{equation*}
   {\sf ALG_f} \leq  \gidN + \sum_{i=1}^{\gidN}{W(i)} \leq 1.7 \optoffline^{B} + \optoffline \leq 2.7 \optoffline^{B}\\
\end{equation*}
Hence, algorithm $\ourAlgoFirst$ when the first-fit strategy is used to compute $\binN$ is $2.7$-competitive.
\end{proof}

\section{Online Variable-size Drone Scheduling}
\label{sec:VariableSizeDS}
   In this section, we discuss a variant of the drone scheduling problem, and we call it \textsc{online variable-size drone scheduling} (OVDS). In the OVDS problem, we know all the customer requests in advance.
 So, we compute the underlying interval graph and partition the intervals into independent sets before the drones appear online. Each interval is associated with a cost reflecting the drone's battery consumption for the corresponding customer request. Drones with different battery capacities appear online. We allocate as many requests as possible to the current drone from one of the pre-computed independent sets. The battery capacity of every drone is at least the maximum cost of an interval. The goal is to use a minimum number of drones.

Our motivation for the OVDS problem comes from the following scenario: there is an allocation unit for drones that allocates a drone as and when requested by a delivery unit. The delivery unit is not aware of the battery capacities of the drones in advance. 
Therefore, upon request, when a drone is allotted, depending upon the drone's battery capacity, the delivery unit aims to schedule as many requests as possible for that drone. 
The delivery unit keeps raising requests to the allocation unit until all customer requests are allocated to some drones.
 The goal is to minimize the number of requests raised by the delivery unit.
 
Let $\mathcal{I}$ be the set of intervals corresponding to the customer requests. 
To allocate drones to the intervals, we assign a color to the intervals, similar to Section \ref{sec:DroneScehedulingBPC}. The color of an interval $I$ denoted by $\clr(I)$ corresponds to the drone allotted to serve the corresponding request. The $\clr(I)$ is defined as two tuples $\clr(I) = (\idN(I), \binN(I))$. We can compute the $\idN$ of any interval $I$ by using Algorithm \ref{alg:BPC-computeidNumber} and the set of intervals with a particular $\idN$ forms an independent set. Let $\{1, 2, \cdots, g\}$ be the set of $\idN$. There is a list of intervals corresponding to each of the $\idN$ in $\{1, 2, \cdots, g\}$ and every list is an independent set.
For each such list, we use the \textit{any-fit} strategy from the bin-packing problem and compute the $\binN$ for the intervals in that list. We present the algorithm $\VarBinDS$ that computes a color (allocates a drone) for each interval. 

\begin{definition}
\label{def:any-fit}
({\bf any-fit})
A bin packing algorithm is called any-fit algorithm if, after packing a bin, there is no item left in the list that could still fit into the bin.
\end{definition}
\begin{algorithm2e}[h]\normalsize
        \DontPrintSemicolon
	\caption{$\VarBinDS (\mathcal{I})$}	
	\label{alg:VariableSize}
 \scriptsize
	\For{each $I \in \mathcal{I}$}{
        compute-idNumber$(I)$\;
        }
        $g$ is the total $\idN$ used\;
        \For{$i = 1$ to $g$}{
        Create a list $List_i$ and $ List_i \gets \emptyset$\;
        }
        \For{each $I \in \mathcal{I}$}{
        Insert $I$ into $List_{\idN(I)}$\;
        }
        \For{each $i = 1$ to $g$}{
        Sort the intervals in $List_i$ as per the non-decreasing order of their costs\;
        }
        \For{each $i = 1$ to $g$}{
        $L_i \leftarrow 0$\;
        \While{$List_i$ is non-empty}{
        Request for a new drone\\
        $b$ is the capacity of this newly arrived drone\;
        $L_i \leftarrow L_i+1$; 
        $rem \leftarrow b$\;
        \For{each $I \in List_i$ as per the sorted order}{
        \If{$\C(I) \leq rem$}{
       $\binN(I) \leftarrow L_i$\\ 
       $rem \leftarrow$ $rem - \C(I)$\;
       Delete $I$ from $List_i$\;
        }
        \Else{
        break
        }
        }
        }
        }
\end{algorithm2e}
\begin{lemma}
    The running time of $\VarBinDS$ is $O(n \log n)$, where $n$ is the total number of requests or delivery.
    \label{lem:RunTime-VariablesizeDS}
\end{lemma}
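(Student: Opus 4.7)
The plan is to analyze the algorithm phase by phase and show that the dominant cost comes from computing $\idN$ values and sorting, while the nested drone-allocation loop can be bounded amortized linearly.

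First, I would handle the preprocessing phases. The loop that invokes {\sf compute-idNumber} on all $n$ intervals costs $O(n \log n)$ by Lemma~\ref{lem:runningtime-of-computerIdN}, since each call runs in $O(\log n)$ worst-case time. The subsequent creation of the $g$ lists, and the insertion of each interval into the appropriate $List_i$, together contribute $O(g + n) = O(n)$ since $g \le n$. Sorting the lists by non-decreasing cost contributes $\sum_{i=1}^{g} |List_i| \log |List_i|$, which is at most $\log n \cdot \sum_i |List_i| = O(n \log n)$.

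The main obstacle is bounding the nested {\sf While}/{\sf For} loops that actually assign bin numbers, because at first glance each list traversal could cost $O(|List_i|)$ per drone, potentially yielding $\Theta(n^2)$. I would argue this via amortization together with two structural observations. First, because every $List_i$ is iterated in non-decreasing cost order, the very first interval that fails to fit in the current drone's remaining capacity also ends the inner {\sf For}, since every later interval has cost at least as large. Thus each drone incurs exactly one ``break'' cost on top of the work that is strictly tied to intervals it successfully packs. Second, by the assumption that the battery capacity $b$ of every drone is at least the maximum cost of any interval, the very first (smallest-cost) interval of a non-empty $List_i$ always fits in a freshly allocated drone. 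Hence each drone serves at least one interval and is responsible for at least one deletion from its list.

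Combining these, I would charge each successful fit-and-delete step to the interval itself ($O(1)$ amortized per interval, $O(n)$ total) and each ``break'' step to the drone that caused it. Because each drone packs at least one interval, the number of drones across all lists is at most $n$, so the total break cost is $O(n)$. Adding up, the allocation phase is $O(n)$, and the total running time is dominated by the $O(n \log n)$ idNumber computation and sorting, yielding the claimed bound.
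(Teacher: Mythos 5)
Your proof is correct and follows the same phase-by-phase decomposition as the paper's: $O(n\log n)$ for the $\idN$ computations, $O(n)$ for list creation, $O(n\log n)$ for sorting, and linear time for the bin assignment. The one place you go beyond the paper is the allocation phase: the paper simply asserts that finding $\binN$ for each interval is done by ``some constant comparisons and assignments'' and hence takes linear time, whereas you supply the amortized charging argument that actually justifies this --- each inner-loop iteration is either a deletion (charged to the interval, $O(n)$ in total) or the single break terminating that drone's scan, and since every drone's capacity is at least the maximum interval cost, each drone packs at least one interval, so the number of drones (and hence of breaks) is at most $n$. This is precisely the justification the paper's one-line claim needs, so your version is, if anything, more complete; the only implementation detail left implicit on both sides is that deletion from $List_i$ must be $O(1)$, e.g.\ via a linked list.
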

\begin{proof}
    We can compute the $\idN$ of each interval in $O(n \log n)$ time ( follows from Lemma \ref{lem:runningtime-of-computerIdN}). Creating a list for each $\idN$ and assigning the intervals takes linear time. Thereafter sorting the intervals in each list takes $O(n \log n)$ time. At last, finding $\binN$ for each interval is done by some constant comparisons and assignments, which takes linear time again. Thus, the overall time complexity is $O(n \log n)$ time.\newline
\end{proof}
\begin{theorem}
    $\VarBinDS$ is $(2\alpha+1)$-competitive, where $\alpha$ is the ratio between the maximum and minimum battery capacity.
\end{theorem}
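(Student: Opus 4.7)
The plan is to bound, for each list $List_i$ separately, the number $L_i$ of drones used by $\VarBinDS$ for $List_i$, and then sum against two complementary lower bounds on $\optoffline$. Fix $i \in \{1, \ldots, g\}$ and let $c_j$ denote the total cost packed into the $j$-th drone assigned to $List_i$, $b_j$ its capacity, and $C_i = \sum_{I \in List_i} \C(I) = \sum_{j=1}^{L_i} c_j$.

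The key structural fact is that items are processed in non-decreasing cost order. When drone $j$ is closed (for $j < L_i$), the inner loop breaks on the first item $I^\star \in List_i$ whose cost exceeds the remaining capacity, i.e.\ $\C(I^\star) > b_j - c_j$; since all subsequent items are at least as large, the early break is justified. By the standing assumption that every drone has capacity at least $\max_I \C(I)$, drone $j+1$ can fit $I^\star$, so the algorithm places it first and $c_{j+1} \geq \C(I^\star)$. Combining the two inequalities yields the two-drone bound
\[
c_j + c_{j+1} \;\geq\; c_j + \C(I^\star) \;>\; b_j \;\geq\; b_{\min}.
\]
Summing over $j = 1, \ldots, L_i - 1$,
\[
2 C_i \;\geq\; \sum_{j=1}^{L_i-1}(c_j + c_{j+1}) \;>\; (L_i - 1)\, b_{\min},
\]
hence $L_i \leq 2 C_i / b_{\min} + 1$.

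To combine across lists, I would invoke two lower bounds on $\optoffline$. First, $\optoffline \geq g$: any valid allocation is a proper coloring of the interval graph (overlapping requests need distinct drones), and the chromatic number of an interval graph is at least $g$, exactly the inequality established in Lemma~\ref{lem:gidNLeqOpt}. Second, $\optoffline \geq C/b_{\max}$, since $\optoffline$ drones carry total capacity at most $\optoffline \cdot b_{\max}$ which must cover the total cost $C := \sum_i C_i$. Summing and using $\alpha = b_{\max}/b_{\min}$,
\[
\sum_{i=1}^{g} L_i \;\leq\; \frac{2C}{b_{\min}} + g \;=\; 2\alpha \cdot \frac{C}{b_{\max}} + g \;\leq\; (2\alpha + 1)\, \optoffline.
\]

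The main obstacle is the two-drone inequality $c_j + c_{j+1} > b_j$. Its proof depends critically on two features of $\VarBinDS$: sorting in non-decreasing cost order (so the single break correctly certifies that the entire tail of $List_i$ cannot fit in the current drone) and the hypothesis $b_{\min} \geq \max_I \C(I)$ (so the overflowing item $I^\star$ is guaranteed to be absorbed by the next drone, seeding the lower bound $c_{j+1} \geq \C(I^\star)$). Without both, the amortization across consecutive drones would leak and the bound would degrade to one depending on the largest item cost rather than $b_{\min}$, thereby breaking the $(2\alpha+1)$ ratio.
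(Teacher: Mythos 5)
Your proof is correct and follows essentially the same route as the paper's: the two-drone inequality $c_j + c_{j+1} > b_j \geq b_{\min}$ summed over consecutive drones in each list, combined with the two lower bounds $\optoffline \geq g$ and $\optoffline \geq C/b_{\max}$ and the substitution $b_{\min} = b_{\max}/\alpha$. Your justification of the two-drone inequality (via the sorted order and the hypothesis that every capacity is at least the largest item cost) is in fact more explicit than the paper's, which asserts it without comment.
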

\begin{proof}
    Let $L_i$ be the number of different bins or $\binN$ created for $\idN = i (1 \leq i \leq g)$. Thus, $\{(i, 1), (i, 2), \cdots, (i, L_i)\}$
are the set of different colors used for the $\idN = i$. For each color, there is a request for a new drone.  So, there is an associated battery capacity for each color. Let $B_i^j$ be the battery capacity of the color $(i, j)$. Let $\C(i, j)$ be the sum of all the costs of the intervals having color $(i, j)$. Then, $\C(i, j) + \C(i, j+1) > B_i^j$, for $1 \leq i \leq g$ and $1 \leq j \leq L_i-1$. So, $\C(i, 1) + 2 \cdot \sum_{j=2}^{L_i-1}\C(i, j) + \C(i, L_i) >\sum_{j=1}^{L_i-1}B_i^j$ , implies $\sum_{i=1}^g \sum_{j=1}^{L_i-1}B_i^j < 2 \cdot \sum_{i=1}^g \sum_{j=1}^{L_i}\C(i, j)$. 

   Let $B_{min} = \underset{1 \leq i \leq g; 1\leq j \leq L_i}{\min} B_i^j$, and $B_{max} = \underset{1 \leq i \leq g; 1\leq j \leq L_i}{\max} B_i^j$. Then, $\alpha = \frac{B_{max}}{B_{min}}$. Let $OPT$ be the optimum colors used for the problem in the offline version when we have the knowledge of all the intervals along with their costs as well as all the drones with different battery capacities. Then, $OPT \geq \frac{\sum_{i=1}^g \sum_{j=1}^{L_i}\C(i, j)}{B_{max}}$. Also, $OPT \geq g$, from Lemma \ref{lem:gidNLeqOpt}.  Thus, the total number of colors used by the algorithm is $\sum_{i=1}^g L_i = \sum_{i=1}^g (L_i - 1) + g \leq \sum_{i=1}^g \sum_{j=1}^{L_i-1} (\frac{B_i^j}{B_{min}}) + g$. Therefore, 
   \setlength{\belowdisplayskip}{0.5pt} \setlength{\belowdisplayshortskip}{0.5pt}
\setlength{\abovedisplayskip}{0.5pt} \setlength{\abovedisplayshortskip}{0.5pt} 
 \begin{equation*}
     \sum_{i=1}^g L_i \leq \frac{2 \cdot \sum \limits_{i=1} \limits^g \sum \limits_{j=1}\limits^{L_i}\C(i, j)}{B_{min}} + g \leq \frac{2 \cdot \alpha \cdot  \sum \limits_{i=1}\limits^g \sum \limits_{j=1}\limits^{L_i}\C(i, j)}{B_{max}} + g \leq (2\alpha+1)OPT.\\
 \end{equation*} 
 Hence the proof.
\end{proof}

\section{Interval generator}
\label{sec:IntervalGen}
In Section \ref{sec:DroneScehedulingBPC}, we studied the \textsc{online drone Scheduling} problem where intervals appear online with a cost and every interval corresponds to a request of a customer.
Here, we analyze further how these intervals can be generated. For that reason, we present and study the \textsc{Interval Generator} problem.
Let us consider the following scenario. There is a truck that carries some drones and moves on a specific route throughout the day. Moreover, on the route have been predefined stops for the truck where the drones can take off or/and land.
During the movement of the truck, we receive a request from a customer that we want to serve. 
The drones will be used to serve the customers.
The goal here is to find the stops that a drone needs to take off and land that minimize a cost function.
Since in the drone scheduling problem for every interval (request) we want to assign a unique drone, 
we need to represent the takeoff and landing time for a particular request as an interval.
Thus, the \textsc{interval generator} problem receives a request in a specific time and produces an interval for this request with a cost based on the stops and the position of the truck.

More specifically, a request $q$ is given as coordinates in the plane and denoted by $q=(x_q,y_q)$.
Now, for the truck we need to consider two things, its route and its position on that.
The route can be considered as known and contains some stops.
Let $\mathcal{S}=\{S_1,S_2,\cdots,S_k\}$ be the set of stops on the route. We assume that the stops $S_1,\cdots,S_k$ are increasingly ordered w.r.t. the time of visit by the truck. Thus, the stop $S_1$ is the start point and $S_k$ is the end point of the truck. Every stop is given as coordinates in the plane, $S_i=(x_i,y_i)$ for all $i \in [1,k]$. Additionally, the stops will be used for the takeoff and landing of the drones as well as for the loading of the drones with the shipment. 
Moreover, the position $p$ of the track corresponds to some coordinates $(x_p,y_p)$ on the plane.

Next, the cost of a request can be considered as the time that a drone needs to serve a customer and to return to the truck (as we assume that the battery consumption is a linear function of the time).
For simplicity, we can assume that there is no battery consumption during the loading of the drones.
Furthermore, we say that the cost of a request is \emph{invalid} if the landing stop of the drone is `smaller' than the position of the truck at the time when the drone returns. Otherwise, we say that the cost of a request is \emph{valid}. We define it formally later.
Now we can construct an interval for the drone scheduling problem.
In Figure \ref{fig:example}, we can see an example of the above scenario where the route is a straight line.
Formally, we have the following definition for the problem.
\begin{definition}
\textbf{(\textsc{Interval Generator})}\\
\underline{\textbf{Input:}} A set of stops $\mathcal{S}=\{S_1,S_2,\cdots,S_k\}$, where each $S_i=(x_i,y_i)$, the truck position $p=(x_p,y_p)$, a request $q=(x_q,y_q)$ and a cost function $\C(\cdot)$.\\
\underline{\textbf{Output:}} An interval $I_{q}=[\ell_{q},r_{q}]$ for the interval graph $\mathcal{I}$ with the minimum valid cost.
\end{definition}

Observe that in the \textsc{Interval Generator} problem there is no restriction on the speed of the truck and the drones. Thus, we provide some assumptions to simplify the study of the problem.

\noindent \textbf{Assumptions}
(1) The speed of the drones and the truck is known and fixed. 
(2) The time of takeoff and landing as well as the time of loading and unloading is considered negligible. 
(3) The time that the truck remains on a stop is also considered negligible.
(4) The timeline of the day is known.

\begin{figure}[]
    \centering
    \includegraphics[width=9cm]{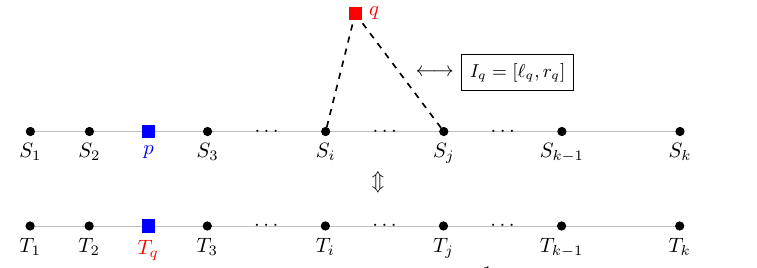}
    \caption{\textbf{Above}: The route of the truck is a straight line. The stops are given as black solid dots, the position $p$ of the truck is given as a \textcolor{blue}{blue square} on the route and the request $q$ as a \textcolor{red}{red square}. Moreover, $S_i$ is the takeoff stop and $S_j$ is the landing stop and $I_{q}=[\ell_{q},r_{q}]$ is the corresponding interval of the solution.
    \textbf{Below}: The timeline of the day with the corresponding times $T_1,\cdots,T_k$ of the stops. $T_q$ is the position of the truck in the timeline when we receive the request $q$.}
    \label{fig:example}
\end{figure}

 We describe now the way that we create an interval for a request for the \textsc{Interval Generator} problem. This can be done in two steps. First, we find the stops that minimize the cost of the request and after we compute the corresponding interval.
For the first part, by the input of the problem we have a set of stops $S$ and based on the assumptions we know the location of the truck every moment through the day.
Hence, we can correspond every stop $S_i$ to a specific time $T_i$ in the timeline of the day.
Moreover, when we receive a request $q$ we know the position $p$ of the truck and thus we can correspond the position of the truck as a time $T_q$ in the timeline in order to know when we have received the request. Next, we need to find the stops (times) for 
the takeoff and landing that minimize the cost of the request. 


As mentioned earlier the cost of a request is the total time from the takeoff stop to the position of the request and from there to the landing stop. 
Assume $S_i$ is the takeoff stop and $S_j$ is the landing stop; then we denote the total time of the delivery as $(S_i,q,S_j)$. 
Thus, in order to find the minimum cost we need to try all possible pairs of stops with $i<j$.
Also, the cost of the request needs to be valid. This means that the time of $(S_i,q,S_j)$ is not greater than the time that the truck needs to go from the stop $S_i$ to the stop $S_j$. Let $A$ be the set of all pairs that give valid cost.
Hence, the formula that computes the minimum valid cost for a request $q$ is the following: 
  $\C(q) = \min_{p \leq i <j,\, (i,j)\in A } (S_i,q,S_j).$

So far we have found the optimal stops for the delivery. Let $S_i$ and $S_j$ be the takeoff and landing stop respectively. We have corresponded every stop to a specific time in the timeline of the day. Thus, we know the takeoff time $T_i$ and the landing time $T_j$.
Now we are ready to create an interval $I_q=[\ell_{q},r_{q}]$ with left endpoint $\ell_{q}$, right endpoint $r_{q}$ and to assign a cost to the interval. 
By the definition of the interval graphs, every vertex corresponds to an interval of the real line in such a way that two vertices are adjacent if and only if the corresponding intervals have a nonempty intersection.
We can consider the timeline of the day as the real line and the times $T_i$ as points on the real line.
Then every valid delivery corresponds to an interval for the interval graph with left endpoint $\ell_{q}= T_i$ and right endpoint $r_{q}=T_j$. Moreover, we associate the cost of the request $q$ to this interval $I_q=[\ell_{q},r_{q}]$, that is $\C(I_q)=\C(q)$.
Hence, we have described how we can find a solution for the \textsc{Interval Generator} problem.
The Algorithm \ref{alg:intervalgenerator} computes an optimal solution for the \textsc{Interval Generator} problem.
\begin{algorithm2e}[h]
	\caption{$\IntervalGen$}	
        \DontPrintSemicolon
	\label{alg:intervalgenerator}	
 \footnotesize
	\textbf{Input:} Stops $\mathcal{S}=\{S_1,S_2,\cdots,S_k\}$, truck position $p=(x_p,y_p)$, request $q=(x_q,y_q)$ and cost function $\C(\cdot)$\;
		\textbf{Output:} An interval $I_{q}=[\ell_{q},r_{q}]$ with the minimum valid cost.\;
		\smallskip
		Let $M$ be the set of the indexes of the stops that are $\geq p$.\;
		\For{ $i \in M$}{
		\For{ $j \in M \quad and \quad i<j$}{
			\If{ $(i,j)$ is a valid pair}{		
			\If{$\C(q)>(S_i,q,S_j)$}{
				$\C(q)=(S_i,q,S_j)$ and
				$(i',j') = (i,j)$\;
			}
			}
		}
		}		
		 Find the times $T_{i'}$ and $T_{j'}$\; 
		Set $\ell_{q}= T_{i'}$ and  $r_{q}=T_{j'}$\;
		Set $\C(I_q)=\C(q)$\;
		\textbf{Return} $I_{q}=[\ell_{q},r_{q}]$\;
	
\end{algorithm2e}
\begin{theorem}
The running time of the Algorithm \ref{alg:intervalgenerator} is $O(k^2 \cdot \mathcal{T})$, where $\mathcal{T}$ is the time needed to compute each of $(S_i,q,S_j)$.
\label{thm:IntervalGenerator}
\end{theorem}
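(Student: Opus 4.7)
The plan is to bound the running time by analyzing the algorithm step by step and identifying the dominant contribution. The algorithm has three conceptual phases: (i) selecting the candidate stops via the set $M$, (ii) the double loop that enumerates pairs of candidate takeoff/landing stops and evaluates their cost, and (iii) the post-processing that converts the optimal pair into the output interval.

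For phase (i), I would argue that determining the indices whose stops satisfy the positional constraint relative to $p$ requires a single linear scan over $\mathcal{S}$, hence costs $O(k)$; this is absorbed in the final bound. For phase (iii), looking up $T_{i'}$ and $T_{j'}$ and assigning $\ell_{q}$, $r_{q}$, $\C(I_q)$ takes $O(1)$ time (or $O(k)$ if the times are not stored in a directly indexable structure), which is also dominated by what follows.

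The main contribution, and the step on which the claim rests, is the doubly nested loop over $i, j \in M$ with $i < j$. In the worst case $|M| = k$, so the loop body is executed $\binom{k}{2} = O(k^2)$ times. Inside each iteration, the algorithm performs a validity check for the pair $(i,j)$ and, when valid, evaluates $(S_i, q, S_j)$ together with a constant number of comparisons and updates to the running minimum $\C(q)$ and the optimal pair $(i', j')$. By hypothesis each such evaluation costs $\mathcal{T}$, and checking validity can be folded into the same cost (since it amounts to comparing $(S_i, q, S_j)$ with the truck's travel time between $S_i$ and $S_j$, itself obtainable in $O(\mathcal{T})$ time or less). Hence the total cost of the loop is $O(k^2 \cdot \mathcal{T})$.

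Combining the three phases, the overall running time is $O(k) + O(k^2 \cdot \mathcal{T}) + O(k) = O(k^2 \cdot \mathcal{T})$, as claimed. The only subtle point, and the step I would be most careful about, is justifying that the validity test per pair does not blow up the cost asymptotically; once that is observed, the remainder of the argument is a direct accounting of the nested loop.
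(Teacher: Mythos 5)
Your proposal is correct and follows essentially the same argument as the paper: bound the nested loop by $O(k^2)$ iterations, charge $\mathcal{T}$ per cost evaluation, and note that the validity check and bookkeeping do not dominate (the paper treats the validity check as $O(1)$ since the truck's travel time between stops is known by assumption, whereas you absorb it into $O(\mathcal{T})$; either way the bound is unchanged).
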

\begin{proof}
    We know that the set $M$ contains at most $k$ elements. Since we have two loops over this set, we have at most $k^2$ iterations.
    In each iteration, we need first to compute the cost $(S_i,q,S_j)$. This can be done in $\mathcal{T}$ time. 
    Next, we check if the cost is valid and if it is we compare it with the previous minimum cost. This can be done in constant time since the time that the truck needs to move from the stop $S_i$ to the stop $S_j$ is known by the assumptions.
    Thus, the running time of the Algorithm  \ref{alg:intervalgenerator} is $O(k^2 \cdot \mathcal{T})$.
\end{proof}
In Section \ref{sec:DroneScehedulingBPC}, we assume that all intervals have distinct endpoints or boundary points. However, Algorithm \ref{alg:intervalgenerator} can generate multiple intervals with common boundary points. 
To make each boundary point distinct, a small $\epsilon>0$ can be used
for such intervals. For example $I_{q}=[\ell_{q},r_{q}]$ and $I_{p}=[\ell_{p},r_{p}]$ with one common boundary point, i.e., $r_{q}= \ell_{p}$. These can be transformed into $[\ell_{q},r_{q}-\epsilon]$ and $[\ell_{p} + \epsilon,r_{p}]$ to ensure all boundary points of the $I_{p}$ and $I_{q}$ are distinct.
\section{Conclusion}
\label{sec:conclusion}
In this paper, we have studied a hybrid truck-drone model for last-mile delivery where delivery requests appear online during the moving truck carrying multiple drones in a predefined path. A drone is enabled to launch from the truck to deliver a parcel to the customer, as per the request, and after delivery, it returns and lands on the truck.
The objective is to minimize the number of drones allocated to complete all feasible deliveries. To address this, we have proposed two algorithms for \textsc{online drone scheduling} problem based on next-fit and first-fit bin-packing strategies. We have analyzed the performances of the algorithms in terms of competitive ratio and worst-case time complexity. We also have introduced \textsc{online variable-size drone scheduling} problem and proposed an algorithm to solve it. We have discussed the interval generator, a tool to calculate delivery time intervals for each request. The interval generator aims to minimize the delivery cost associated with each request, choosing delivery intervals from predefined discrete stopping points on the truck's path. We can treat these points as mini-warehouses where packages can be loaded into the truck, if necessary, to fulfill the requests. 
For future work, evaluating the optimal drone-truck path to deliver all types of packages would be interesting along with a simulation study for possible practical application. Another interesting problem to address in the future is determining the delivery interval for each request when there are no predefined discrete stopping points along the truck's path. The model's applicability can also be extended to a drone-delivery system with a recharging battery policy for the drones.

\bibliographystyle{splncs03}
\bibliography{Refer}

\end{document}